\newtheorem{Theorem}{Theorem}[section]
\newtheorem{Lemma}{Lemma}[section]
\newtheorem{Remark}{Remark}[section]
\newtheorem{Definition}{Definition}[section]
\newtheorem{Proposition}{Proposition}[section]
\newtheorem{Example}{Example}[section]
\makeatletter \@addtoreset{equation}{section} \makeatother
\begin{document}
\title{A class of constacyclic codes are generalized Reed-Solomon codes }

\author{Hongwei Liu,~Shengwei Liu\textsuperscript{*}}
\date{School of Mathematics and Statistics,
 Central China Normal University, Wuhan 430079, China}
\maketitle
\insert\footins{\small{\it Email addresses}: ~
 hwliu@ccnu.edu.cn (Hongwei Liu); ~
 shengweiliu@mails.ccnu.edu.cn (Corresponding author: Shengwei Liu).\\
 }
{\centering\section*{Abstract}}
 \addcontentsline{toc}{section}{\protect Abstract} %¼ÓÈëÄ¿Â¼
 \setcounter{equation}{0} %¹«Ê½±àºÅ´Ó1¿ªÊ¼
Maximum distance separable (MDS) codes are optimal in the sense that the minimum distance cannot be improved for a given length and code size. The most prominent MDS codes are generalized Reed-Solomon (GRS) codes. The square $\mathcal{C}^{2}$ of a linear code $\mathcal{C}$ is the linear code spanned by the component-wise products of every pair of codewords in $\mathcal{C}$. For an MDS code $\mathcal{C}$, it is convenient to determine whether $\mathcal{C}$ is a GRS code by determining the dimension of $\mathcal{C}^{2}$. In this paper, we investigate under what conditions that MDS constacyclic codes are GRS. For this purpose, we first study the square of constacyclic codes. Then, we give a sufficient condition that a constacyclic code is GRS. In particular, We provide a necessary and sufficient condition that a constacyclic code of a prime length is GRS.

\medskip
\noindent{\large\bf Keywords:~}\medskip  Constacyclic codes, GRS codes, MDS codes, Schur square

\noindent{\bf2010 Mathematics Subject Classification}: 94B05, 94B65.

\section{Introduction}
A linear code $\mathcal{C}$ of length $n$, dimension $k$ and minimum Hamming distance $d$ over the finite field $\mathbb{F}_{q}$ is called an $[\,n, k, d\,]_{q}$ code. If the parameters of the code $\mathcal{C}$ reach the Singleton bound $d=n-k+1$, then $\mathcal{C}$ is called a {\it maximum distance separable (MDS) code} \cite{R. Roth}. Generalized Reed-Solomon codes are one of the famous classes of MDS codes.
For other classes of MDS codes, we refer to \cite{A. Lempel} and \cite{G. Seroussi}. Given vectors $\bm x=(x_{1},\dots,x_{n}),~\bm y=(y_{1},\dots,y_{n})$ of $\mathbb{F}_{q}^{n}$, the Schur product of $\bm x$ and $\bm y$ is $\bm x*\bm y=(x_{1}y_{1},\dots,x_{n}y_{n})$. For two linear codes $C,~D\subseteq\mathbb{F}_{q}^{n}$, the {\it product code} $C*D$ is a linear subspace of $\mathbb{F}_{q}^{n}$ generated by all products $\bm x*\bm y$ where $\bm x\in C,~\bm y\in D$. When $C=D$, we call $C*C=C^{2}$ the {\it square} of $C$. For the basic results about product of linear codes, we refer to \cite{H. Randriambololona}. In \cite{I. Cascudo}, Cascudo characterized the square of cyclic codes by defining sets. In \cite{B.H. Falk}, the authors studied the power of constacyclic codes by pattern polynomials.
 In \cite{D. Mirandola}, Mirandola and Z\'{e}mor characterized the codes $C$ and $D$ whose product has maximum possible minimum distance and showed that an $[\,n, k, d\,]$ MDS code with $k\leq\frac{n-1}{2}$ is GRS if and only if $\dim(\mathcal{C}^{2})=2k-1$.
 In recent years, many authors applied the square of an MDS code to determine whether it is a GRS code (see \cite{P. Beelen},~\cite{H.LIU} and \cite{C.ZHU}). The key point is to determine the dimension of $\mathcal{C}^{2}$.

In \cite{A. Ashikhmin}, the authors showed that quantum MDS codes can be obtained by constructing Hermitian self-orthogonal MDS codes. In \cite{M. Grassl}, Grassl and R\"{o}tteler obtained new quantum MDS codes by shortening quantum MDS codes of length $q^{2}+1$ using puncture codes (see \cite{E. M. Rains}) and came up with their Conjecture 11 for a class of constacyclic codes. In \cite{S. BallR. Vilar}, Ball and Vilar determined when a truncated generalized Reed-Solomon code is Hermitian self-orthogonal, and proved the conjecture 11 of \cite{M. Grassl} for the version of GRS codes. In \cite{S. Ball}, Ball proved that the MDS constacyclic codes which given in \cite{M. Grassl} were GRS codes by a technical method. Combining with the results from \cite{S. BallR. Vilar}, Ball showed that Conjecture 11 from \cite{M. Grassl} is true.
On the other hand, in order to construct quantum MDS codes, many authors constructed Hermitian self-orthogonal MDS codes via constacyclic codes (for example \cite{X. Kai}, \cite{T. Zhang} and \cite{B. Chen}) and GRS codes (for example \cite{G. Guo}, \cite{G. Wang}). It is natural to consider whether these MDS constacyclic codes are GRS codes. Therefore, it is interesting to investigate under what conditions MDS constacyclic codes are GRS.

In this paper, we investigate under what conditions that a MDS constacyclic code $\mathcal{C}$ is GRS by determining the dimension of $\mathcal{C}^{2}$. For this purpose, we first characterize the square of a constacyclic code by it's defining set. Our result of square of a constacyclic code slightly generalizes the result in \cite{I. Cascudo}. Then, we give a sufficient condition that a constacyclic code is GRS. In particular, we provide a necessary and sufficient condition that a constacyclic code of a prime length is GRS. This paper is organized as follows. Section $2$ gives some necessary definitions and notions. In Section $3$, we study the square of constacyclic codes. In Section $4$, we give two conditions that a constacyclic code is GRS. Section~5 concludes our work.

\section{Preliminaries}
Let $\mathbb{F}_{q}$ be the finite field of $q$ elements, where $q$ is a prime power. Let $\mathbb{F}_{q}^{*}=\mathbb{F}_{q}\backslash\{0\}$ be the multiplicative group of $\mathbb{F}_q$. For two integers $c,~d$, the greatest common divisor of $c$ and $d$ is denoted by $(c,~d)$. The size of a finite set $S$ is denoted by $|S|$. For a positive integer $m$, $\mathbb{Z}_{m}$ denotes the ring of integers modulo $m$,~$\mathbb{Z}_{m}^{\times}$ denotes the set of all invertible elements of $\mathbb{Z}_{m}$ with respect to multiplication.

For an abelian group $G$, we denote the operation of $G$ as $+$. For nonempty subsets $A_{1},~A_{2},~A$ of $G$, define
$$A_{1}+A_{2}=\{a_{1}+a_{2}\mid a_{1}\in A_{1},~a_{2}\in A_{2} \},$$
$$-A=\{-a\mid a\in A\},$$
$$A_{1}-A_{2}=A_{1}+(-A_{2})$$
and
$$A^{c}=G\setminus A=\{g \mid g\in G,~g\notin A\}.$$

For a linear $[\,n, k, d\,]_{q}$ code $\mathcal{C}$ and a codeword $\bm c\in\mathcal{C}$, we denote the Hamming weight of $\bm c$ as $wt(\bm c)$. The dual code $\mathcal{C}^{\perp}$ of $\mathcal{C}$ is defined by
$$\mathcal{C}^{\perp}=\{(b_{0},\dots,b_{n-1})\in\mathbb{F}_{q}^{n}\mid\sum_{i=0}^{n-1}b_{i}c_{i}=0,~\forall~(c_{0},\dots,c_{n-1})\in\mathcal{C}\}.
$$

Let $\mathbb{F}_{q}[x]$ be the polynomial ring over $\mathbb{F}_{q}$.
Let $\bm\alpha=(\alpha_{1},\alpha_{2},...,\alpha_{n})$ and $\bm v=(v_1,v_2,\dots, v_n)$ be two vectors of length $n$ over $\mathbb{F}_q$. We define the evaluation map related to $\bm\alpha$ and $\bm v$ as follows:

$$
ev_{\bm\alpha, \bm v}:\mathbb{F}_{q}[x]\rightarrow\mathbb{F}_{q}^{n},~f(x) \mapsto (v_1f(\alpha_{1}),v_2f(\alpha_{2}),...,v_nf(\alpha_{n})).
$$
\begin{Definition}
Let $\alpha_{1},\dots,\alpha_{n}\in\mathbb{F}_{q}\bigcup\{\infty\}$ be distinct elements, $k<n$, and $v_{1},\dots,v_{n}\in\mathbb{F}_{q}^{*}$. The corresponding generalized Reed-Solomon (GRS) code is defined by
$$
GRS_{n,k} = \{(v_{1}f(\alpha_{1}),\dots,v_{n}f(\alpha_{n}))\mid f\in \mathbb{F}_{q}[x], \deg f<k\}.
$$
In this setting, for a polynomial $f(x)$ of degree $\deg f(x)<k$, the quantity $f(\infty)$ is defined as the coefficient of $x^{k-1}$ in the polynomial $f(x)$. In the case $v_{i}=1$ for all $i$, the code is called a Reed-Solomon (RS) code.
\end{Definition}

It is well-known that GRS codes are MDS codes and the dual of GRS codes are also GRS.

\begin{Definition}\label{def-1}
Let $\bm x=(x_{1},\dots,x_{n})$, $\bm y=(y_{1},\dots,y_{n})\in\mathbb{F}_{q}^{n}$, the {\it Schur product} of $\bm x$ and $\bm y$ is defined as $\bm x*\bm y=(x_{1}y_{1},\dots,x_{n}y_{n})$. The product of two linear codes $\mathcal{C}_{1}$, $\mathcal{C}_{2}\subseteq\mathbb{F}_{q}^{n}$ is defined as
$$\mathcal{C}_{1}*\mathcal{C}_{2}=\left\langle\bm c_{1}*\bm c_{2}\mid\bm c_{1}\in\mathcal{C}_{1},~\bm c_{2}\in\mathcal{C}_{2}\right\rangle_{\mathbb{F}_{q}}$$ where $\left\langle S\right\rangle_{\mathbb{F}_{q}}$ denotes the $\mathbb{F}_{q}$-linear subspace generated by the subset $S$ of $\mathbb{F}_{q}^{n}$.

In particular, if $\mathcal{C}_{1}=\mathcal{C}_{2}$, we call $\mathcal{C}^{2}=\mathcal{C}*\mathcal{C}$ the square code.
\end{Definition}

In this paper, we always assume $(q,~n)=1$. For $\lambda\in\mathbb{F}_{q}^{*}$, a $q-$ary $[n,~k]$ linear code $\mathcal{C}$ is called $\lambda-$constacyclic if $(c_{0},~c_{1},\dots,c_{n-1})\in\mathcal{C}$ implies $(\lambda c_{n-1},~c_{0},\dots,c_{n-2})\in\mathcal{C}$.
There is an $\mathbb{F}_{q}-$linear isomorphism between $\mathbb{F}_{q}^{n}$ and $R_{n,~\lambda}=\mathbb{F}_{q}[x]/(x^{n}-\lambda)$ which is defined by
$$(c_{0},~c_{1},\dots,c_{n-1})\mapsto c_{0}+c_{1}x+\dots+c_{n-1}x^{n-1}+(x^{n}-\lambda)
$$
where $(x^{n}-\lambda)$ is the ideal of $\mathbb{F}_{q}[x]$ generated by $x^{n}-\lambda$. Then a $\lambda-$constacyclic code of length $n$ over $\mathbb{F}_{q}$ is an ideal of the ring $R_{n,~\lambda}=\mathbb{F}_{q}[x]/(x^{n}-\lambda)$. Since $R_{n,~\lambda}$ is a principal ideal ring, every $\lambda-$constacyclic code $\mathcal{C}$ is generated uniquely by a monic divisor $g(x)$ of $x^{n}-\lambda$ and denoted by $\mathcal{C}=(g(x))$. We call $g(x)$ and $h(x)=\frac{x^{n}-\lambda}{g(x)}=\sum_{i=0}^{k}h_{i}x^{i}$ the generator polynomial and the check polynomial of $\mathcal{C}$, respectively. The dimension of $\mathcal{C}$ equals to $n-\deg g(x)$.

Suppose the splitting field of $x^{n}-\lambda$ over $\mathbb{F}_{q}$ is $\mathbb{F}_{q^{d}}$. Then, there exist $\beta\in\mathbb{F}_{q^{d}}$ and a primitive $n$th root of unity $\alpha$ such that $\{\beta\alpha^{i}\mid 0\leq i\leq n-1\}$ is the set of the roots of $x^{n}-\lambda$.

\begin{Definition}\label{def-zero}
With the notations above, let $\mathcal{C}$ be a $\lambda-$constacyclic code with generator polynomial $g(x)$. Suppose the set of roots of $g(x)$ is $\{\beta\alpha^{i_{j}}\mid 0\leq j\leq n-k\}$, we call the set $I=\{i_{j}\mid 1\leq j\leq n-k\}$ the defining set of $\mathcal{C}$.
\end{Definition}

Obviously, the dimension of $\mathcal{C}$ equals to $n-|I|$. Since $\alpha$ is a primitive $n$th root of unity in $\mathbb{F}_{q^{d}}$, we can view $I$ as a subset of $\mathbb{Z}_{n}$. It is well-known that the dual code $\mathcal{C}^{\perp}$ is a $\lambda^{-1}-$constacyclic code generated by $$g^{\perp}(x)=h_{0}^{-1}x^{k}h(x^{-1})=\sum_{i=0}^{k}h_{i}h_{0}^{-1}x^{k-i},$$
 then the defining set of $\mathcal{C}^{\perp}$ is
 $-I^{c}=-(\mathbb{Z}_{n}\setminus I).$

 Thus,~
 $\{\beta^{-1}\alpha^{i}\mid 0\leq i\leq n-1\}$
  is the set of the roots of $x^{n}-\lambda^{-1}$.
  And
  $\{\beta^{-1}\alpha^{l}\mid l\in -I^{c}\}$
   is the set of the roots of the generator polynomial of $\mathcal{C}^{\perp}$.

\section{Schur square of constacyclic codes}
In this section, we study the square of constacyclic codes. In \cite{I. Cascudo}, the author characterized the square of cyclic codes based on the results of \cite{J. Bierbrauer}. In this section, we slightly extend the result to constacyclic codes.

Let $\mathcal{C}$ be a $\lambda-$constacyclic code of length $n$ and dimension $k$ over $\mathbb{F}_{q}$ with generator polynomial $g(x)$. Suppose $I$ is the defining set of $\mathcal{C}$, then $|I|=n-k$. Recall that $\mathbb{F}_{q^{d}}$ is the splitting field of $x^{n}-\lambda$ over $\mathbb{F}_{q}$.

Let $\widetilde{\mathcal{C}}(I)$ be an $\mathbb{F}_{q^{d}}-$linear code generated by
$$
G_{I}=\left[
\begin{matrix}
1&\beta\alpha^{i_{1}}&\dots&(\beta\alpha^{i_{1}})^{n-1}\\
1&\beta\alpha^{i_{2}}&\dots&(\beta\alpha^{i_{2}})^{n-1}\\
\vdots&\vdots&&\vdots\\
1&\beta\alpha^{i_{n-k}}&\dots&(\beta\alpha^{i_{n-k}})^{n-1}
\end{matrix}
\right]_{(n-k)\times n}.
$$
where $I=\{i_{1},\dots,i_{n-k}\}$.

%By the equation
%$$
%\sum_{i=0}^{n-1}\alpha^{ji}=
%\begin{cases}
%n  & j\equiv 0\pmod{n}, \\
%0  & \text otherwise.
%\end{cases}
%$$
%where $\alpha$ is a primitive $n-$th root of unit,
Suppose $-I^{c}=\{j_{1},\dots,j_{k}\}$, we know that the $\mathbb{F}_{q^{d}}-$dual code $\widetilde{\mathcal{C}}(I)^{\perp}$ of $\widetilde{\mathcal{C}}(I)$ is generated by
$$
H_{-I^{c}}=\left[
\begin{matrix}
1&\beta^{-1}\alpha^{j_{1}}&\dots&(\beta^{-1}\alpha^{j_{1}})^{n-1}\\
1&\beta^{-1}\alpha^{j_{2}}&\dots&(\beta^{-1}\alpha^{j_{2}})^{n-1}\\
\vdots&\vdots&&\vdots\\
1&\beta^{-1}\alpha^{j_{k}}&\dots&(\beta^{-1}\alpha^{j_{k}})^{n-1}
\end{matrix}
\right]_{k\times n}.
$$

Let $\mathbb{F}\supseteq\mathbb{F}_{q}$ be a field extension and $\mathcal{A}$ be an linear code of length $n$ over $\mathbb{F}$, we denote $\mathcal{A}|_{\mathbb{F}_{q}}=\mathcal{A}\bigcap\mathbb{F}_{q}^{n}$ the subfield subcode of $\mathcal{A}$. Since a codeword $\bm c=(c_{0},\dots,c_{n-1})\in\mathcal{C}$ if and only if $c(\beta\alpha^{l})=0$ for any $l\in I$, where $c(x)=\sum_{i=0}^{n-1}c_{i}x^{i}$,~then $\mathcal{C}=\widetilde{\mathcal{C}}(I)^{\perp}|_{\mathbb{F}_{q}}$.

For an $h$ dimensional linear code $\mathcal{B}\in\mathbb{F}_{q}^{n}$, suppose $\{\bm b_{1},\dots,\bm b_{h}\}$ is a base of the code $\mathcal{B}$, then we define an $\mathbb{F}-$linear code as
 $$\mathbb{F}\bigotimes_{\mathbb{F}_{q}}\mathcal{B}=\{a_{1}\bm b_{1}+\dots+a_{h}\bm b_{h}\mid a_{i}\in \mathbb{F}\}.$$

\begin{Proposition}\label{pro-con}
Let $I$ be the defining set of the $\lambda-$constacyclic code $\mathcal{C}$. Then $\mathcal{C}^{2}$ is a $\lambda^{2}-$constacyclic code with defining set $(I^{c}+I^{c})^{c}$. In particular, the dimension of $\mathcal{C}^{2}$ equals to $|I^{c}+I^{c}|$.
\end{Proposition}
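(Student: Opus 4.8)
The plan is to treat the two assertions separately: first that $\mathcal{C}^{2}$ is $\lambda^{2}$-constacyclic, and then that its defining set is $(I^{c}+I^{c})^{c}$. For the constacyclic structure I would argue directly at the level of the shift operator. Writing $\sigma_{\mu}$ for the $\mu$-constacyclic shift $(c_{0},\dots,c_{n-1})\mapsto(\mu c_{n-1},c_{0},\dots,c_{n-2})$, a one-line computation gives the identity $\sigma_{\lambda^{2}}(\bm a*\bm b)=\sigma_{\lambda}(\bm a)*\sigma_{\lambda}(\bm b)$ for all $\bm a,\bm b\in\mathbb{F}_{q}^{n}$, since both sides equal $(\lambda^{2}a_{n-1}b_{n-1},a_{0}b_{0},\dots,a_{n-2}b_{n-2})$. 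Taking $\bm a,\bm b\in\mathcal{C}$, the right-hand side lies in $\mathcal{C}*\mathcal{C}=\mathcal{C}^{2}$; since the products $\bm a*\bm b$ span $\mathcal{C}^{2}$ and $\sigma_{\lambda^{2}}$ is linear, $\mathcal{C}^{2}$ is closed under $\sigma_{\lambda^{2}}$ and hence is $\lambda^{2}$-constacyclic.

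For the defining set, the key device is a twisted discrete Fourier transform adapted to the roots $\beta\alpha^{a}$. For $\bm c\in\mathbb{F}_{q^{d}}^{n}$ set $\hat c_{a}=c(\beta\alpha^{a})=\sum_{i}c_{i}\beta^{i}\alpha^{ai}$, so that $\bm c\in\mathcal{C}$ forces $\hat c_{a}=0$ for every $a\in I$. Using $\beta^{n}=\lambda$ together with the inverse transform (legitimate because $(q,n)=1$ makes $n$ invertible in the field), I would establish the convolution identity $(\bm c*\bm c')(\beta^{2}\alpha^{l})=\tfrac{1}{n}\sum_{a+b\equiv l\,(n)}\hat c_{a}\hat c'_{b}$, which converts the Schur product into a convolution of spectra with respect to the $\lambda^{2}$-transform (evaluation at $\beta^{2}\alpha^{l}$, whose roots are $\{\beta^{2}\alpha^{l}\}$ since $(\beta^{2})^{n}=\lambda^{2}$). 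As $\hat c_{a}=\hat c'_{b}=0$ whenever $a\in I$ or $b\in I$, only terms with $a,b\in I^{c}$ survive, so the $\lambda^{2}$-spectrum of any product $\bm c*\bm c'$ is supported in $I^{c}+I^{c}$; this yields the inclusion $(I^{c}+I^{c})^{c}\subseteq$ (defining set of $\mathcal{C}^{2}$) at once.

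The main obstacle is the reverse inclusion, i.e.\ showing the spectral support is \emph{exactly} $I^{c}+I^{c}$: over $\mathbb{F}_{q}$ the cyclotomic-coset constraints forbid a codeword whose spectrum is concentrated at a single frequency, so convolutions cannot obviously be made to hit every element of $I^{c}+I^{c}$. I would circumvent this by extending scalars to $\mathbb{F}_{q^{d}}$. A dimension count, namely $\dim_{\mathbb{F}_{q}}\mathcal{C}=n-|I|=\dim_{\mathbb{F}_{q^{d}}}\widetilde{\mathcal{C}}(I)^{\perp}$ together with the containment $\mathcal{C}\subseteq\widetilde{\mathcal{C}}(I)^{\perp}$, shows $\mathbb{F}_{q^{d}}\otimes_{\mathbb{F}_{q}}\mathcal{C}=\widetilde{\mathcal{C}}(I)^{\perp}$. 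Because the square is spanned by the products of a fixed basis, squaring commutes with scalar extension, so $\mathbb{F}_{q^{d}}\otimes_{\mathbb{F}_{q}}\mathcal{C}^{2}=(\widetilde{\mathcal{C}}(I)^{\perp})^{2}$. Over $\mathbb{F}_{q^{d}}$ the transform is a bijection onto functions supported on $I^{c}$, so for any $a,b\in I^{c}$ I can choose $\bm c,\bm c'$ with $\hat c=\delta_{a}$ and $\hat c'=\delta_{b}$; the identity above then produces a product whose $\lambda^{2}$-spectrum is $\delta_{a+b}$. Hence the attained spectra span all functions supported on $I^{c}+I^{c}$, which combined with the forward inclusion identifies $(\widetilde{\mathcal{C}}(I)^{\perp})^{2}$ with the code of spectral support exactly $I^{c}+I^{c}$.

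Finally, since the defining set and the dimension of a constacyclic code are read off from which evaluation points are common zeros and are therefore invariant under scalar extension, transporting the conclusion back from $\mathbb{F}_{q^{d}}$ to $\mathbb{F}_{q}$ gives that the defining set of $\mathcal{C}^{2}$ is precisely $(I^{c}+I^{c})^{c}$, whence $\dim\mathcal{C}^{2}=n-|(I^{c}+I^{c})^{c}|=|I^{c}+I^{c}|$.
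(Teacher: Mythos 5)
Your proposal is correct and follows essentially the same route as the paper: extend scalars to the splitting field $\mathbb{F}_{q^{d}}$, identify $\mathbb{F}_{q^{d}}\otimes_{\mathbb{F}_{q}}\mathcal{C}$ with $\widetilde{\mathcal{C}}(I)^{\perp}$ by the dimension count, use that squaring commutes with scalar extension, and observe that products of the pure-frequency generators (your $\delta_{a}$-spectrum codewords are exactly the rows of $H_{-I^{c}}$) produce the evaluation vectors indexed by $I^{c}+I^{c}$. Your DFT/convolution identity and the direct shift-operator verification of $\lambda^{2}$-constacyclicity are sound but are just a spectral rephrasing of the paper's generator-matrix computation.
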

\begin{proof}
Since $\dim_{\mathbb{F}_{q}}(\mathcal{C})=\dim_{\mathbb{F}_{q^{d}}}(\widetilde{\mathcal{C}}(I)^{\perp})$ and $\mathcal{C}\subseteq\widetilde{\mathcal{C}}(I)^{\perp}$, we know that $\widetilde{\mathcal{C}}(I)^{\perp}=\mathbb{F}_{q^{d}}\bigotimes_{\mathbb{F}_{q}}\mathcal{C}$. By Lemma 2.23 of \cite{H. Randriambololona}, $$\mathbb{F}_{q^{d}}\bigotimes_{\mathbb{F}_{q}}\mathcal{C}^{2}=(\mathbb{F}_{q^{d}}\bigotimes_{\mathbb{F}_{q}}\mathcal{C})^{2}.$$

Let $\bm v_{s}$ denote the $s$th row of the matrix $H_{-I^{c}}$, where $1\leq s\leq k$. By Definition~\ref{def-1}, we know that
$$
(\mathbb{F}_{q^{d}}\bigotimes_{\mathbb{F}_{q}}\mathcal{C})^{2}=\left\langle \bm v_{i}*\bm v_{j}\mid 1\leq i,j \leq k \right\rangle_{\mathbb{F}_{q^{d}}}.
$$
Thus, $(\mathbb{F}_{q^{d}}\bigotimes_{\mathbb{F}_{q}}\mathcal{C})^{2}$ is generated by
$$
\left[
\begin{matrix}
1&\beta^{-2}\alpha^{l_{1}}&\dots&(\beta^{-2}\alpha^{l_{1}})^{n-1}\\
1&\beta^{-2}\alpha^{l_{2}}&\dots&(\beta^{-2}\alpha^{l_{2}})^{n-1}\\
\vdots&\vdots&&\vdots\\
1&\beta^{-2}\alpha^{l_{t}}&\dots&(\beta^{-2}\alpha^{l_{t}})^{n-1}
\end{matrix}
\right]_{t\times n},
$$
where $-(I^{c}+I^{c})=\{l_{1},\dots,l_{t}\}$.

Hence, we have $\mathcal{C}^{2}=\mathbb{F}_{q^{d}}\bigotimes_{\mathbb{F}_{q}}\mathcal{C}^{2}|_{\mathbb{F}_{q}}=
(\mathbb{F}_{q^{d}}\bigotimes_{\mathbb{F}_{q}}\mathcal{C})^{2}|_{\mathbb{F}_{q}}$, which completes the proof.
\end{proof}

%\begin{Remark}
%In Proposition~\ref{pro-con}, even though the roots of $x^{n}-\lambda^{2}$ are $\{\delta^{l}\mid l\in2+r\mathbb{Z}_{rn}\}$ which is not the form as usual, we can also get the same basic results of constacyclic codes as we stating in the preliminaries, and since $\delta^{r}$ is a primitive $n-$th root of unit, we can also view the defining set of $\mathcal{C}^{2}$ as a subset of $\mathbb{Z}_{n}$. Furthermore, for any $t+r\mathbb{Z}_{rn}$ and a primitive $rn-$th root of unit $\delta$, $\{\delta^{l}\mid l\in t+r\mathbb{Z}_{rn}\}$ are roots of $x^{n}-\delta^{tn}$. For any $\delta^{tn}-$constacyclic codes, we can also say the defining set of them.
%\end{Remark}

\begin{Remark}
In \cite{I. Cascudo}, the author studied square of cyclic codes. Theorem 3.3 of \cite{I. Cascudo} is the case $\lambda=1$ in Proposition~\ref{pro-con}.
\end{Remark}

In the following, we give three examples of the codes constructed by M. Grassl and M. R\"{o}tteler in \cite{M. Grassl}. In \cite{S. Ball}, the author proved that these three MDS constacyclic codes are GRS codes by a technical method. We first characterize them by Proposition~\ref{pro-con}. We will also use these examples in Section 4.

\begin{Example}\label{exm-3}
Let $\omega$ be a primitive element of $\mathbb{F}_{q^{2}}$ and let $\alpha=\omega^{q-1}$ be a primitive $(q+1)$th root of unity. Let $\mathcal{C}_{1}=(g_{1}(x))$ be an $\omega^{q+1}-$constacyclic code where $g_{1}(x)|(x^{q+1}-\omega^{q+1})$. For $\dim(\mathcal{C}_{1})$ even and $q$ odd, let $g_{1}(x)=\prod_{i=-s+1}^{s}(x-\omega\alpha^{i})$ then $\mathcal{C}_{1}=(g_{1}(x))$ is an MDS code of dimension $q-2s+1$. The defining set of $\mathcal{C}_{1}$ is $I_{1}=\{-s+1,~-s+2,\dots,s\}\subseteq\mathbb{Z}_{q+1}$. By Proposition~\ref{pro-con}, $\mathcal{C}_{1}^{2}$ is an ~$\omega^{2q+2}-$constacyclic code with defining set $(I_{1}^{c}+I_{1}^{c})^{c}$.
\end{Example}

\begin{Example}\label{exm-1}
Let $\omega$ be a primitive element of $\mathbb{F}_{q^{2}}$ and let $\alpha=\omega^{q-1}$ be a primitive $(q+1)$th root of unity. Let $\mathcal{C}_{2}=(g_{2}(x))$ be a cyclic code where $g_{2}(x)|(x^{q+1}-1)$. For $q$ and $\dim(\mathcal{C}_{2})$ both odd or both even, let $g_{2}(x)=\prod_{i=-s}^{s}(x-\alpha^{i})$ then $\mathcal{C}_{2}=(g_{2}(x))$ is an MDS cyclic code of dimension $q-2s$. The defining set of $\mathcal{C}_{2}$ is $I_{2}=\{-s,-s+1,\dots,s\}\subseteq\mathbb{Z}_{q+1}$. By Proposition~\ref{pro-con}, $\mathcal{C}_{2}^{2}$ is a cyclic code with defining set $(I_{2}^{c}+I_{2}^{c})^{c}$.
\end{Example}

\begin{Example}\label{exm-2}
Let $\omega$ be a primitive element of $\mathbb{F}_{q^{2}}$ and let $\alpha=\omega^{q-1}$ be a primitive $(q+1)$th root of unity. Let $\mathcal{C}_{3}=(g_{3}(x))$ be a cyclic code where $g_{3}(x)|(x^{q+1}-1)$. For $\dim(\mathcal{C}_{3})$ odd and $q$ even, let $g_{3}(x)=\prod_{i=\frac{1}{2}q-s}^{\frac{1}{2}q+s+1}(x-\alpha^{i})$ then $\mathcal{C}_{3}=(g_{3}(x))$ is an MDS cyclic code of dimension $q-2s-1$. The defining set of $\mathcal{C}_{3}$ is $I_{3}=\{\frac{1}{2}q-s,\dots,\frac{1}{2}q+s+1\}\subseteq\mathbb{Z}_{q+1}$
 By Proposition~\ref{pro-con}, $\mathcal{C}_{3}^{2}$ is a cyclic code with defining set $(I_{3}^{c}+I_{3}^{c})^{c}$.
\end{Example}

\section{Conditions for a class of MDS constacyclic codes to be RS codes }
In this section, we give two conditions that a constacyclic code is GRS. First, Corollary $2$ in \cite{P. BeelenS} mentioned that an $[\,n, k\,]$ MDS code with $k\leq 2$ or $n-k\leq 2$ is a GRS code. Hence, in the following, we assume that $k>2$ or $n-k>2$.

We present two useful well-known results.

\begin{Lemma}(Cauchy~\cite{A. Cauchy}-Davenport~\cite{H. Davenport})\label{the-1}
If $A,~B\subseteq\mathbb{Z}_{p}$ are nonempty and $p$ is prime, then
$$|A+B|\geq \min\{p,~|A|+|B|-1\}.
$$
\end{Lemma}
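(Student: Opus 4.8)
The plan is to prove the bound in two regimes according to whether $|A|+|B|$ exceeds $p$, and to treat the nontrivial regime by an induction on $\min\{|A|,|B|\}$ driven by a size-preserving set transformation.

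First I would dispose of the case $|A|+|B|>p$ by a pigeonhole argument. Here $\min\{p,|A|+|B|-1\}=p$, so it suffices to show $A+B=\mathbb{Z}_p$. Fix any $g\in\mathbb{Z}_p$ and consider the two sets $A$ and $g-B=\{g-b\mid b\in B\}$; both sit inside $\mathbb{Z}_p$ and satisfy $|A|+|g-B|=|A|+|B|>p$, so they must intersect. An element $a=g-b$ in the intersection yields $g=a+b\in A+B$. As $g$ was arbitrary, $A+B=\mathbb{Z}_p$ and the bound holds with equality.

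For the remaining case $|A|+|B|\le p$ I would show $|A+B|\ge|A|+|B|-1$ by induction on $|A|$, assuming without loss of generality $|A|\le|B|$. The base case $|A|=1$ is immediate, since then $A+B$ is a single translate of $B$, so $|A+B|=|B|=|A|+|B|-1$. For the inductive step ($|A|\ge2$) I would use the Davenport transform: for $e\in\mathbb{Z}_p$ set
\[
A^{(e)}=A\cap(B+e),\qquad B^{(e)}=B\cup(A-e).
\]
A direct check shows $A^{(e)}+B^{(e)}\subseteq A+B$ and $|A^{(e)}|+|B^{(e)}|=|A|+|B|$, so the transform preserves the total size and can only shrink the sumset. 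If I can choose $e$ with $1\le|A^{(e)}|<|A|$, then $A^{(e)},B^{(e)}$ are nonempty with $|A^{(e)}|+|B^{(e)}|\le p$, and the induction hypothesis applied to them gives $|A+B|\ge|A^{(e)}+B^{(e)}|\ge|A^{(e)}|+|B^{(e)}|-1=|A|+|B|-1$, as desired.

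The crux, and the place where primality of $p$ is essential, is proving that such an $e$ exists. The set $A^{(e)}$ is nonempty exactly when $e\in A-B$, and $A^{(e)}=A$ exactly when $A\subseteq B+e$. So I must rule out the possibility that, for every $e$ with $A\cap(B+e)\neq\emptyset$, one in fact has $A\subseteq B+e$. Assuming this pathology and fixing $a\in A$, I would show it forces $(A-a)+B\subseteq B$, and hence $(A-a)+B=B$ because $0\in A-a$. Letting $H$ be the subgroup of $\mathbb{Z}_p$ generated by $A-a$, this makes $B$ a union of cosets of $H$; since $p$ is prime, $H=\{0\}$ or $H=\mathbb{Z}_p$. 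The first forces $A=\{a\}$, contradicting $|A|\ge2$, and the second forces $B=\mathbb{Z}_p$, contradicting $|A|+|B|\le p$ with $A$ nonempty. This contradiction supplies the required $e$ and closes the induction. An alternative I would keep in reserve is the polynomial method: if $|A+B|\le|A|+|B|-2\le p-1$, enlarge $A+B$ to a set $C'$ of size $|A|+|B|-2$ and apply the Combinatorial Nullstellensatz to $f(x,y)=\prod_{c\in C'}(x+y-c)$, whose coefficient of $x^{|A|-1}y^{|B|-1}$ is $\binom{|A|+|B|-2}{|A|-1}\not\equiv0\pmod p$, to produce $a\in A,\,b\in B$ with $a+b\notin C'$, a contradiction.
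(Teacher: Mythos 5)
Your proof is correct. Note, however, that the paper does not prove this lemma at all: it is stated as a classical result with citations to Cauchy and Davenport, and is used as a black box in the proof of Theorem~\ref{the-cy}. What you have supplied is a complete, self-contained proof along the standard lines: the pigeonhole argument ($A$ and $g-B$ must meet when $|A|+|B|>p$) correctly disposes of the large case, and in the case $|A|+|B|\le p$ your induction via the Davenport $e$-transform is sound --- the verifications that $A^{(e)}+B^{(e)}\subseteq A+B$ and $|A^{(e)}|+|B^{(e)}|=|A|+|B|$ are right, and your treatment of the degenerate situation (where every admissible $e$ satisfies $A\subseteq B+e$) correctly isolates the one place primality enters, through the subgroup generated by $A-a$ being trivial or all of $\mathbb{Z}_p$. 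The Combinatorial Nullstellensatz alternative you sketch is also valid, since $\binom{|A|+|B|-2}{|A|-1}$ is a nonzero residue mod $p$ precisely because $|A|+|B|-2<p$. Either argument would serve as a legitimate appendix proof; the transform argument is the more elementary of the two, while the polynomial method generalizes more readily to restricted sumsets.
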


A subset $D$ of $\mathbb{Z}_{n}$ is called an {\it arithmetic progression} of difference $g\in\mathbb{Z}_{n}$ with $g\neq0$, if there exist $\beta\in\mathbb{Z}_{n}$ such that $D-\beta=\{ig\mid0\leq i\leq|D|-1\}$. If two nonempty subsets $A,~B$ of $\mathbb{Z}_{n}$ satisfy the equality in Lemma~\ref{the-1}, then the pair $(A,~B)$ is called a {\it critical pair}.

\begin{Lemma}(Vosper~\cite{A. G. Vosper})\label{the-Vosp}
(A,~B) is a critical pair of nonempty subsets of $\mathbb{Z}_{p}$ where $p$ is prime, if and only if one of the following holds.

(1) $|A|+|B|>p$ and $A+B=\mathbb{Z}_{p}$.

(2) $|A|+|B|=p$ and $|A+B|=p-1$.

(3) $\min\{|A|,|B|\}=1$.

(4) $A$ and $B$ are arithmetic progressions with a common difference.
\end{Lemma}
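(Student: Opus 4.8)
The plan is to treat the two directions separately, with the reverse (``if'') direction being routine and the forward (``only if'') direction carrying all the weight. For the reverse direction I would verify that each of (1)--(4) forces $|A+B|=\min\{p,|A|+|B|-1\}$. In (1), for any $c\in\mathbb{Z}_{p}$ the sets $A$ and $c-B$ have sizes summing to more than $p$, so they intersect and $c\in A+B$, giving $A+B=\mathbb{Z}_{p}$; in (3) a singleton summand acts as a translation, so $|A+B|=\max\{|A|,|B|\}=|A|+|B|-1$; in (4) the sumset of two progressions of common difference $g$ is again a progression of difference $g$, of size $\min\{p,|A|+|B|-1\}$; and (2) is immediate from its defining equalities. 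Thus each of the four conditions yields a critical pair.

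For the forward direction, assume $(A,B)$ is critical and first dispose of the boundary regimes. If $|A|+|B|>p$, the intersection argument just used shows $A+B=\mathbb{Z}_{p}$, which is case (1). If $\min\{|A|,|B|\}=1$ we are in case (3). If $|A|+|B|=p$, then criticality reads $|A+B|=p-1$, which is exactly case (2). Hence it remains to prove the genuinely structural statement: if $|A|,|B|\ge 2$, $|A|+|B|\le p-1$ and $|A+B|=|A|+|B|-1$, then $A$ and $B$ are arithmetic progressions of a common difference, i.e.\ case (4).

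I would prove this core case by induction on $m=\min\{|A|,|B|\}$ using the Dyson $e$-transform. For $e\in\mathbb{Z}_{p}$ set $A_{e}=A\cup(B+e)$ and $B_{e}=B\cap(A-e)$. Two facts drive the argument: $A_{e}+B_{e}\subseteq A+B$ (a short case check according to whether the first summand lies in $A$ or in $B+e$), and $|A_{e}|+|B_{e}|=|A|+|B|$ (since $|A\cap(B+e)|=|B_{e}|$). Consequently, whenever $B_{e}\ne\emptyset$, Lemma~\ref{the-1} forces $(A_{e},B_{e})$ to be critical as well. The base case $|B|=2$, say $B=\{0,g\}$ after a translation, is direct: criticality gives $|A\cap(A+g)|=|A|-1$, so the $g$-shift moves $A$ off itself in exactly one element; since $g\ne0$ and $p$ is prime, translation by $g$ generates a single $p$-cycle, and a set meeting its $g$-shift in all but one element is a single arc in that cycle, hence a progression of difference $g$. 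Thus both $A$ and $B$ are progressions of the common difference $g$.

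The main obstacle is the inductive step: one must produce an $e$ with $2\le|B_{e}|<|B|$, so that the smaller critical pair $(A_{e},B_{e})$ falls under the induction hypothesis, and then transfer the conclusion back to $(A,B)$. For such an $e$ one automatically has $\min\{|A_{e}|,|B_{e}|\}=|B_{e}|<|B|=m$, because $|A_{e}|\ge|A|\ge|B|$ after ordering $|B|\le|A|$. I would argue that if every $e$ gave $|B_{e}|\in\{0,1,|B|\}$, then $A$ and $B$ would be forced into a periodic degenerate configuration incompatible with $|A|+|B|\le p-1$; here the primality of $\mathbb{Z}_{p}$ is essential, since in a general cyclic group Kneser-type periodicity could genuinely occur and block the conclusion. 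Granting a good $e$, the induction makes $A_{e}$ and $B_{e}$ progressions of a common difference $g$, and the final pull-back step uses the rigidity of the equality $|A+B|=|A|+|B|-1$ to propagate the difference $g$ back to $A$ and to $B$ themselves. Controlling the nonemptiness and proper containment of $B_{e}$, together with this pull-back, is precisely where the real work lies, and it is exactly where the hypotheses $|A|+|B|\le p-1$ and $p$ prime become indispensable.
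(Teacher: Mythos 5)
The paper offers no proof of this lemma: it is Vosper's theorem, quoted directly from the cited 1956 paper, so there is no in-paper argument to compare yours against. Judged on its own terms, your outline is correct where it is explicit --- the four ``if'' verifications are fine, the reduction of the ``only if'' direction to the structural core ($|A|,|B|\ge 2$, $|A+B|=|A|+|B|-1\le p-2$ implies both sets are progressions with a common difference) is the right decomposition, the stated properties of the Dyson $e$-transform are correct, and the base case $|B|=2$ is a complete and correct argument.

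However, there is a genuine gap: the two steps you yourself flag as ``where the real work lies'' are precisely the entire content of Vosper's theorem, and they are asserted rather than proved. First, the existence of an $e$ with $2\le |B_e|<|B|$ is not a routine degeneracy check; the $e$-transform used to prove Cauchy--Davenport deliberately drives $|B_e|$ down to $1$, and arranging that it stops at some intermediate size (or extracting usable information when it cannot) is exactly the delicate point. Second, even granted such an $e$, the ``pull-back'' is nontrivial: $A_e=A\cup(B+e)$ being a progression of difference $g$ does not formally yield that $A$ and $B$ separately are progressions of difference $g$ --- a union of two sets can be an arithmetic progression without either piece being one, so you need an additional rigidity argument using $|A+B|=|A|+|B|-1$ that you have not supplied. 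The classical proofs avoid this by a different device, the complement duality: setting $T=(A+B)^{c}$, one shows via Cauchy--Davenport that $T-A=B^{c}$ and that $(T,-A)$ is again a critical pair with $|T|$ controlled, which permits a clean descent. As it stands your proposal is a plausible strategy sketch, not a proof.
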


The following proposition is crucial for our results in this section.

\begin{Proposition}\cite{D. Mirandola}\label{pro-zy}
Let $\mathcal{C}\subseteq\mathbb{F}_{q}^{n}$ be an MDS code with $\dim(C)\leq\frac{n-1}{2}$. The code $\mathcal{C}$ is GRS if and only if $\dim(\mathcal{C}^{2})=2\dim(\mathcal{C})-1$.
\end{Proposition}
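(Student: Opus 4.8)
The statement is an equivalence, so the plan is to prove the two implications separately, after isolating one auxiliary inequality that both directions rely on. The forward implication (GRS $\Rightarrow \dim(\mathcal{C}^2)=2k-1$, where $k=\dim(\mathcal{C})$) I would dispatch by direct computation. Writing $\mathcal{C}=ev_{\bm\alpha,\bm v}(\{f\mid \deg f<k\})$, the square is spanned by the coordinatewise products $(v_i^{2}(fg)(\alpha_i))_i$ with $\deg f,\deg g<k$; since such products $fg$ range over all polynomials of degree $\le 2k-2$, we obtain that $\mathcal{C}^{2}$ is itself a GRS code with the same evaluation points and multipliers $v_i^{2}$, now on polynomials of degree $<2k-1$. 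Because $k\le\frac{n-1}{2}$ gives $2k-1\le n-2<n$, the evaluation map is injective on polynomials of degree $<2k-1$, so $\dim(\mathcal{C}^{2})=2k-1$ exactly (the point $\infty$ being handled by the leading-coefficient convention of Definition~1).

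Next I would record the general lower bound $\dim(\mathcal{C}^{2})\ge 2k-1$, valid for every MDS code once $2k-1\le n$. Fix an information set and a systematic generator matrix $G=[\,I_k\mid P\,]$ with rows $g_1,\dots,g_k$. The MDS property forces every entry of $P$ to be nonzero (a zero entry in row $i$ would make $\mathrm{wt}(g_i)<n-k+1$), and it forces the parity parts $p_2,\dots,p_k$ to be linearly independent (a dependence would produce a nonzero codeword supported on $k-1<n-k+1$ coordinates). I would then exhibit $2k-1$ independent elements of $\mathcal{C}^{2}$: the $k$ products $g_i*g_i$, which restrict to $e_1,\dots,e_k$ on the information coordinates, together with the $k-1$ products $g_1*g_m$ for $2\le m\le k$, which vanish on the information coordinates and restrict on the parity coordinates to $p_1*p_m$, independent because $p_1$ is a coordinatewise unit. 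Thus $\dim(\mathcal{C}^{2})\ge 2k-1$, with equality exactly when these $2k-1$ products already span $\mathcal{C}^{2}$.

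The substance lies in the reverse implication. Assuming $\dim(\mathcal{C}^{2})=2k-1$, the products above form a basis, so every remaining product $g_i*g_j$ is a prescribed linear combination of them; projecting each such membership onto the information and the parity coordinates converts it into a coordinatewise (Hadamard) identity among the rows of $P$. The plan is to solve this system and show it forces the $n$ columns of $G$, viewed as points of $\mathbb{P}^{k-1}$, to lie on a rational normal curve of degree $k-1$ — the classical characterization of GRS codes. Conceptually I would phrase this through the quadratic Veronese map $\nu_2\colon\mathbb{P}^{k-1}\to\mathbb{P}^{\binom{k+1}{2}-1}$: since $\dim(\mathcal{C}^{2})$ equals the dimension of the linear span of the images $\nu_2$ of the columns, the hypothesis says these $n\ge 2k+1$ Veronese images span only a $\mathbb{P}^{2k-2}$, and the general-position (MDS) hypothesis should then pin the original points to a rational normal curve. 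Recovering its parametrization yields the evaluation points $\alpha_i$ and multipliers $v_i$ exhibiting $\mathcal{C}$ as GRS.

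The main obstacle is exactly this converse rigidity: deducing from the single numerical equality $\dim(\mathcal{C}^{2})=2k-1$ that the points genuinely lie on a rational normal curve, rather than merely spanning a degenerate subspace for some accidental reason. I expect to lean heavily on the MDS hypothesis to exclude such coincidental degeneracies and on the inequality $n\ge 2k+1$ to supply enough points to rigidify the curve; the bookkeeping of the Hadamard identities among the rows of $P$, and the extraction of a consistent parametrization from them, is where the real work concentrates. To keep this manageable I would attempt an induction on $k$, peeling off one evaluation point at a time by shortening and puncturing $\mathcal{C}$ (and tracking the corresponding effect on $\mathcal{C}^{2}$), so that each inductive step reduces to reconstructing a single additional point on the curve.
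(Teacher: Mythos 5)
The paper does not prove this proposition at all: it is quoted verbatim from Mirandola--Z\'emor \cite{D. Mirandola}, so there is no internal proof to compare against. Judged on its own terms, your forward implication is correct and complete: products of evaluations of polynomials of degree $<k$ span exactly the evaluations of polynomials of degree $\le 2k-2$, and injectivity of the evaluation map (since $2k-2<n$) gives $\dim(\mathcal{C}^{2})=2k-1$, with the $\infty$-coordinate handled correctly by the leading-coefficient convention. Your lower bound $\dim(\mathcal{C}^{2})\ge 2k-1$ via the systematic generator matrix is also sound: MDS forces all entries of $P$ nonzero and any $k-1$ of its rows independent, and the $2k-1$ products $g_i*g_i$ and $g_1*g_m$ are indeed linearly independent. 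This is the standard argument for the square case of the product Singleton bound.

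The genuine gap is the entire reverse implication, which is the substance of the proposition. What you offer there is a strategy (Veronese embedding, rational normal curves, an induction by shortening and puncturing) together with an explicit admission that ``the real work concentrates'' in steps you do not carry out: solving the system of Hadamard identities among the rows of $P$, excluding degenerate configurations whose Veronese images span a $\mathbb{P}^{2k-2}$ without lying on a rational normal curve, and extracting a consistent parametrization. None of these is routine --- this rigidity statement is precisely the main theorem of \cite{D. Mirandola}, whose proof occupies several pages and runs through a classification of critical pairs for the product Singleton bound (a coding-theoretic analogue of Vosper's theorem), not through a direct Veronese argument. As written, your text establishes only the easy direction and the inequality $\dim(\mathcal{C}^{2})\ge 2k-1$; the implication $\dim(\mathcal{C}^{2})=2k-1\Rightarrow\mathcal{C}$ is GRS remains unproven. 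Since the paper itself simply cites \cite{D. Mirandola} for this, the honest options are either to do the same or to actually execute the reconstruction argument you sketch, including a precise induction hypothesis and a verification that shortening/puncturing preserves both the MDS property and the dimension condition on the square.
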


In \cite{D. Mirandola}, when $\dim(\mathcal{C})=\frac{n}{2}$, the authors showed that we could not determine whether $\mathcal{C}$ is GRS by $\dim(\mathcal{C}^{2})$.

\begin{Theorem}\label{the-cf}
Let $\mathcal{C}$ be a $\lambda-$constacyclic code of length $n$ and dimension $k$ over $\mathbb{F}_{q}$ with defining set $I$, and suppose $2<k\leq\frac{n-1}{2}$ or $\frac{n+1}{2}\leq k<n-2$. If $I$ is an arithmetic progression with the difference $a\in\mathbb{Z}_{n}^{\times}$, then $\mathcal{C}$ is a GRS code.
\end{Theorem}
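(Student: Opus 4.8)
The plan is to use Proposition~\ref{pro-zy} as the engine: since $\mathcal{C}$ is MDS and we are in the regime $2<k\leq\frac{n-1}{2}$ (and the complementary range will be handled by duality), it suffices to show that $\dim(\mathcal{C}^{2})=2k-1$. By Proposition~\ref{pro-con}, $\dim(\mathcal{C}^{2})=|I^{c}+I^{c}|$, so the entire problem reduces to a purely additive-combinatorial computation of the sumset $|I^{c}+I^{c}|$ inside $\mathbb{Z}_{n}$, where $I$ is the given defining set.

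First I would translate the hypothesis on $I$ into a statement about $I^{c}$. Since $\mathcal{C}$ is MDS of dimension $k$ with $|I|=n-k$, and $I$ is an arithmetic progression of difference $a\in\mathbb{Z}_{n}^{\times}$, its complement $I^{c}\subseteq\mathbb{Z}_{n}$ has size $k$ and, because $a$ is invertible, $I^{c}$ is also an arithmetic progression of difference $a$ (the complement of a block of consecutive terms in the $a$-scaled ordering is again a block of consecutive terms). Concretely, after multiplying by $a^{-1}$ — an automorphism of $\mathbb{Z}_{n}$ that preserves all sumset cardinalities — I may assume $a=1$, so that $I^{c}=\{b,b+1,\dots,b+k-1\}$ for some $b$. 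Then $I^{c}+I^{c}=\{2b,2b+1,\dots,2b+2k-2\}$ is an arithmetic progression of length $2k-1$. The key point is that under the assumption $2<k\leq\frac{n-1}{2}$ we have $2k-1\leq n-2<n$, so this progression does \emph{not} wrap all the way around $\mathbb{Z}_{n}$ and collapse; its $2k-1$ listed elements are genuinely distinct modulo $n$, giving $|I^{c}+I^{c}|=2k-1$ exactly. Hence $\dim(\mathcal{C}^{2})=2k-1$, and Proposition~\ref{pro-zy} yields that $\mathcal{C}$ is GRS.

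For the second range $\frac{n+1}{2}\leq k<n-2$, I would pass to the dual. The dual code $\mathcal{C}^{\perp}$ is a $\lambda^{-1}$-constacyclic MDS code of dimension $n-k$ with defining set $-I^{c}$, which is again an arithmetic progression with invertible difference $-a$; since $n-k\leq\frac{n-1}{2}$, the first case shows $\mathcal{C}^{\perp}$ is GRS, and because the dual of a GRS code is GRS, $\mathcal{C}$ is GRS as well.

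I expect the only genuinely delicate step to be the bookkeeping that establishes $|I^{c}+I^{c}|=2k-1$ rather than something smaller: one must verify that the progression of difference $a$ does not overlap itself modulo $n$, which is exactly where the hypothesis $k\leq\frac{n-1}{2}$ (equivalently $2k-1<n$) is used and cannot be dropped. The reduction to $a=1$ via the scaling automorphism is routine, as is the duality argument, so the crux is this non-wraparound count. It is worth noting that this makes the condition genuinely sufficient but not necessary in general, matching the paper's billing of this theorem as a sufficient condition.
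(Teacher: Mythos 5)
Your argument follows the same route as the paper's proof: reduce $\dim(\mathcal{C}^{2})$ to the sumset $|I^{c}+I^{c}|$ via Proposition~\ref{pro-con}, compute that sumset for an arithmetic progression with invertible difference, invoke Proposition~\ref{pro-zy}, and handle the range $\frac{n+1}{2}\leq k<n-2$ by duality. The sumset computation (including the reduction to $a=1$ by the scaling automorphism and the observation that $2k-1\leq n-2$ rules out wrap-around) is correct and is exactly what the paper does, merely phrased additively rather than by writing out the translated progressions.

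There is, however, one genuine omission. You write ``since $\mathcal{C}$ is MDS'' as though it were a hypothesis, but the theorem does not assume $\mathcal{C}$ is MDS, and Proposition~\ref{pro-zy} applies only to MDS codes. The MDS property must itself be derived from the hypothesis that $I$ is an arithmetic progression with difference $a\in\mathbb{Z}_{n}^{\times}$. The paper does this by noting that, since $(a,n)=1$, $\alpha^{a}$ is again a primitive $n$th root of unity, so the $(n-k)\times n$ matrix $G_{I}$ whose rows are indexed by the roots $\beta\alpha^{i_{j}}$ with $i_{j}\in I$ has every set of $n-k$ columns linearly independent (a Vandermonde-type argument once the exponents form a progression with invertible step); hence $G_{I}\bm c^{T}=\bm 0$ forces $wt(\bm c)\geq n-k+1$ for every nonzero codeword, i.e.\ $\mathcal{C}$ is MDS. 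This step is short but cannot be skipped, and it is the second place where the invertibility of $a$ is used (the first being the non-collapse of the sumset). With that paragraph added, your proof is complete and coincides with the paper's.
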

\begin{proof}
Since $I$ is an arithmetic progression with the difference $a\in\mathbb{Z}_{n}^{\times}$, then there exists an element $b\in \mathbb{Z}_{n}$ such that $$I-b=\{ai\mid i=0,1,\dots,n-k-1\}.$$ Obviously,
\begin{equation*}
\begin{aligned}
I^{c}&=b+\{aj\mid j=n-k,~n-k+1,\dots,n-1\}\\
     &=b+a(n-k)+\{at\mid t=0,1,\dots,k-1\}.
\end{aligned}
\end{equation*}
It is easy to see that $I^{c}$ and $-I^{c}$
are also arithmetic progressions with the difference $a$.

Suppose
$$
G_{I}=\left[
\begin{matrix}
1&\beta\alpha^{i_{1}}&\dots&(\beta\alpha^{i_{1}})^{n-1}\\
1&\beta\alpha^{i_{2}}&\dots&(\beta\alpha^{i_{2}})^{n-1}\\
\vdots&\vdots&&\vdots\\
1&\beta\alpha^{i_{n-k}}&\dots&(\beta\alpha^{i_{n-k}})^{n-1}
\end{matrix}
\right]_{(n-k)\times n}.
$$
where $I=\{i_{1},\dots,i_{n-k}\}$.

Since $(a,n)=1$, then $\alpha^{a}$ is also a primitive $n$th root of unity. Hence, it is easy to see that any $n-k$ columns of $G_{I}$ are linearly independent. For any $\bm c=(c_{0},c_{1},\dots,c_{n-1})\in\mathcal{C}$, we have $G_{I}\bm c^{T}=\bm 0$, thus $wt(\bm c)\geq n-k+1$, then we have $\mathcal{C}$ is MDS.

When $2<k\leq\frac{n-1}{2}$, by Proposition~\ref{pro-con}, we know that the defining set of $\mathcal{C}^{2}$ is
\begin{equation*}
\begin{aligned}
(I^{c}+I^{c})^{c}&=(2b+2a(n-k)+\{at\mid t=0,1,\dots,2k-2\})^{c}\\
&=2b+2a(n-k)+\{at\mid t=2k-1,2k\dots,n-1\}.
\end{aligned}
\end{equation*}
Thus the dimension of $\mathcal{C}^{2}$ equals to $n-|(I^{c}+I^{c})^{c}|=2k-1$. By Proposition~\ref{pro-zy}, we know that $\mathcal{C}$ is GRS.

Since the dual code of a GRS code is also a GRS code, and the defining set of $\mathcal{C}^{\perp}$ is $-I^{c}$, then the same argument can be applied for the case $\frac{n+1}{2}\leq k<n-2$.
\end{proof}

\begin{Theorem}\label{the-cy}
Let $\mathcal{C}$ be a $\lambda-$constacyclic code of length $n=p$ and dimension $k$ over $\mathbb{F}_{q}$ where $p$ is an odd prime and $(p,q)=1$. Suppose the defining set of $\mathcal{C}$ is $I$ and $2<k\leq\frac{n-1}{2}$ or $\frac{n+1}{2}\leq k<n-2$, then $\mathcal{C}$ is a GRS code if and only if $I$ is an arithmetic progression with difference $a$ for some $a\neq0$.

\end{Theorem}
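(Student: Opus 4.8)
The plan is to reduce the GRS property to a purely additive-combinatorial statement about $I^{c}$ and then invoke Vosper's theorem. The ``if'' direction is immediate: since $n=p$ is prime, every nonzero $a\in\mathbb{Z}_{n}$ lies in $\mathbb{Z}_{n}^{\times}$, so Theorem~\ref{the-cf} applies verbatim and an arithmetic-progression defining set forces $\mathcal{C}$ to be GRS. Hence the substance of the argument is the ``only if'' direction.

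For the ``only if'' direction I would first treat the range $2<k\leq\frac{n-1}{2}$. Assume $\mathcal{C}$ is GRS. Since $\mathcal{C}$ is MDS with $\dim(\mathcal{C})=k\leq\frac{n-1}{2}$, Proposition~\ref{pro-zy} gives $\dim(\mathcal{C}^{2})=2k-1$. On the other hand, Proposition~\ref{pro-con} identifies $\dim(\mathcal{C}^{2})=|I^{c}+I^{c}|$, and $|I^{c}|=n-|I|=k$. Therefore $|I^{c}+I^{c}|=2k-1=2|I^{c}|-1$. Because $2k-1<p$ (as $k\leq\frac{p-1}{2}$), the Cauchy--Davenport bound of Lemma~\ref{the-1} reads $|I^{c}+I^{c}|\geq 2|I^{c}|-1$, so the equality above says precisely that $(I^{c},I^{c})$ is a critical pair.

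Next I would feed $(A,B)=(I^{c},I^{c})$ into Vosper's theorem (Lemma~\ref{the-Vosp}) and eliminate the degenerate alternatives. Case (1) needs $|A|+|B|=2k>p$, impossible since $2k\leq p-1$; case (2) needs $2k=p$, impossible because $p$ is odd; case (3) needs $|I^{c}|=k=1$, excluded by $k>2$. Only case (4) survives, so $I^{c}$ is an arithmetic progression with some difference $a\neq0$. Finally, since $a$ is invertible modulo the prime $p$, multiplication by $a$ permutes $\mathbb{Z}_{p}$, and writing $\mathbb{Z}_{p}=\{b+ja\mid 0\leq j\leq p-1\}$ with $I^{c}=\{b+ja\mid 0\leq j\leq k-1\}$ shows that its complement $I=\{b+ja\mid k\leq j\leq p-1\}$ is again an arithmetic progression of difference $a$, as desired.

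The remaining range $\frac{n+1}{2}\leq k<n-2$ I would handle by duality: $\mathcal{C}^{\perp}$ is a constacyclic code of dimension $n-k$ with defining set $-I^{c}$, and $2<n-k\leq\frac{n-1}{2}$. As the dual of a GRS code is GRS, $\mathcal{C}$ is GRS if and only if $\mathcal{C}^{\perp}$ is, so the case already proved gives that $-I^{c}$---hence $I^{c}$, hence $I$---is an arithmetic progression. The only genuinely delicate point is the case analysis in Vosper's theorem, where one must check that every option other than the arithmetic-progression case is ruled out by the parity of $p$ and the hypotheses on $k$; the passage from ``$I^{c}$ is an arithmetic progression'' to ``$I$ is an arithmetic progression'' is then a short consequence of the invertibility of the common difference modulo $p$.
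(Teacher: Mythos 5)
Your proposal is correct and follows essentially the same route as the paper: Theorem~\ref{the-cf} for the ``if'' direction, then Propositions~\ref{pro-zy} and~\ref{pro-con} to show $(I^{c},I^{c})$ is a critical pair, Vosper's theorem to conclude $I^{c}$ (hence $I$) is an arithmetic progression, and duality for $\frac{n+1}{2}\leq k<n-2$. You are in fact more explicit than the paper in ruling out cases (1)--(3) of Lemma~\ref{the-Vosp} and in deducing that the complement of an arithmetic progression with invertible difference is again one, both of which the paper leaves implicit.
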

\begin{proof}
Note that $a\neq0$ iff $(a,p)=1$ in $\mathbb{Z}_{p}$, then $\alpha^{a}$ is also a primitive $n$th root of unity.
The ``if" part can be obtained directly from Theorem~\ref{the-cf}, so we only need to prove the ``only if" part.

When $2<k\leq\frac{n-1}{2}$. By Proposition~\ref{pro-zy}, the dimension of $\mathcal{C}^{2}$ equals to $2k-1$. Combining with Proposition~\ref{pro-con}, we know that $(I^{c},I^{c})$ is a critical pair. Then by Lemma~\ref{the-Vosp}, we know that $I^{c}$ is an arithmetic progression which implies $I$ is also an arithmetic progression.

Since the dual code of a GRS code is also a GRS code, and the defining set of $\mathcal{C}^{\perp}$ is $-I^{c}$, then the same argument can be applied for the case $\frac{n+1}{2}\leq k<n-2$.
\end{proof}

\begin{Remark}
Since $n=p$ is an odd prime, then $\frac{n}{2}$ is not an integer. Hence, Theorem~\ref{the-cy} holds for all $k$.
\end{Remark}

\begin{Remark}\label{rem-cyc2}
When $\lambda=1$,~$\mathcal{C}$ is cyclic, then the same results as corollaries of Theorems~\ref{the-cf},~\ref{the-cy} hold for cyclic codes.
\end{Remark}

\begin{Remark}\label{rem-cyc3}
In \cite{S. Ball}, the author proved that the codes constructed by M. Grassl and M. R\"{o}tteler in \cite{M. Grassl} (see the examples in Section 3) are GRS codes. Obviously, the defining set of three classes MDS constacyclic codes in \cite{M. Grassl} are arithmetic progressions, then by the results in this section, we also get the same results as in \cite{S. Ball} except for $k=\frac{n}{2}$.
\end{Remark}

\begin{Remark}\label{rem-cyc5}
The defining sets of the MDS constacyclic codes in \cite{X. Kai}, \cite{T. Zhang} and \cite{B. Chen} are arithmetic progressions with invertible differences, then we know that these codes are GRS codes when $k\neq\frac{n}{2}$.
\end{Remark}

\section{Conclusion}
 This paper investigates under what conditions that MDS constacyclic codes are GRS. We study the square of constacyclic codes by defining set. Then, we give a sufficient condition that a constacyclic code is GRS. In particular, We provide a necessary and sufficient condition that a constacyclic code of prime length is GRS. By our results, we get the same result as in \cite{S. Ball} except for $k=\frac{n}{2}$. We also show that the MDS constacyclic codes in \cite{X. Kai}, \cite{T. Zhang} and \cite{B. Chen} are GRS codes when $k\neq\frac{n}{2}$.

\vskip 4mm

\noindent {\bf Acknowledgement.} This work was supported by NSFC (Grant No. 11871025, 12271199).

%\bibliography{mybibfile}

\end{document}